\newcommand{\envelope}{(\raisebox{-.5pt}{\scalebox{1.45}{\Letter}}\kern-.5pt)}
\newcommand{\be}{\begin{equation}}
\newcommand{\ee}{\end{equation}}
\spnewtheorem{subcase}{Subcase}[case]{\it}{}
\numberwithin{subcase}{case}
\begin{document}
\title{A Combination of Flow Shop Scheduling and the Shortest Path Problem\thanks{A preliminary version of this paper has appeared in the Proceedings of 19th Annual International Computing and Combinatorics Conference (COCOON'13), LNCS, vol. 7936, pp. 680--687.}}

\author{Kameng Nip \and Zhenbo Wang\and Fabrice Talla Nobibon \and Roel Leus %etc.
}

%\authorrunning{Short form of author list} % if too long for running head

\institute{Kameng Nip \at
              Department of Mathematical Sciences, Tsinghua University, Beijing, 100084, China\\
              \email{njm11@mails.tsinghua.edu.cn}           %  \\
%             \emph{Present address:} of F. Author  %  if needed
           \and
           Zhenbo Wang \envelope \at
              Department of Mathematical Sciences, Tsinghua University, Beijing, 100084, China\\
              \email{zwang@math.tsinghua.edu.cn}
           \and
           Fabrice Talla Nobibon \at
                Postdoctoral researcher for Research Foundation--Flanders\\
                ORSTAT, Faculty of Economics and Business, KU Leuven, Belgium\\
              \email{Fabrice.TallaNobibon@kuleuven.be}
           \and
           Roel Leus \at
                ORSTAT, Faculty of Economics and Business, KU Leuven, Belgium\\
              \email{Roel.Leus@kuleuven.be}
}

\date{Received: date / Accepted: date}

\maketitle
\begin{abstract}
This paper studies a combinatorial optimization problem which is obtained by combining the flow shop scheduling problem and the shortest path problem. The objective of the obtained problem is to select a subset of jobs that constitutes a feasible solution to the shortest path problem, and to execute the selected jobs on the flow shop machines to minimize the makespan. We argue that this problem is NP-hard even if the number of machines is two, and is NP-hard in the strong sense for the general case. We propose an intuitive approximation algorithm for the case where the number of machines is an input, and an improved approximation algorithm for fixed number of machines.

\keywords{approximation algorithm; combination of optimization problems; flow shop scheduling; shortest path}
\end{abstract}

\section{Introduction}
Combinatorial optimization is an active field in operations research and theoretical computer science. Historically, independent lines separately developed, such as machine scheduling, bin packing, travelling salesman problem, network flows, etc. With the rapid development of science and technology, manufacturing, service and management are often integrated, and decision-makers have to deal with systems involving several characteristics from more than one well-known combinatorial optimization problem. To the best of our knowledge, the combination of optimization problems has received only little attention in literature.

\citet{Bodlaender1994} studied parallel machine scheduling with incompatible jobs, in which two incompatible jobs cannot be processed by the same machine, and the objective is to minimize the makespan. This problem can be considered as a combination of parallel machine scheduling and the coloring problem.
\citet{WC12} studied a combination of parallel machine scheduling and the vertex cover problem. The goal is to select a subset of jobs that forms a vertex cover of a given graph and to execute these jobs on $m$ identical parallel machines to minimize the makespan. They proposed an $(3 - \frac{2}{m+1})$-approximation algorithm for this problem. \citet{WHH13} have investigated a generalization of the above problem that combines the uniformly related parallel machine scheduling problem and a generalized covering problem. They proposed several approximation algorithms and mentioned as future research other combinations of well-known combinatorial optimization problems. This is the core motivation for this work.

Let us consider the following scenario. We aim at building a railway between two specific cities. The railway needs to cross several adjacent cities, which is determined by a map (a graph). The processing time of manufacturing the rail track for each pair of cites varies between the pairs. Manufacturing a rail track between two cities in the graph is associated with a job. The decision-maker needs to make two main decisions: (1) choosing a path to connect the two cities, and (2) deciding the schedule of manufacturing the rail tracks on this path in the factory. In addition, the manufacturing of rail tracks follows several working stages, each stage must start after the completion of the preceding stages, and we assume that there is only one machine for each stage. We wish to accomplish the manufacturing as early as possible, i.e. minimize the last completion time; this is a standard flow shop scheduling problem. How should a decision-maker choose a feasible path such that the corresponding jobs can be manufactured as early as possible? This problem combines the structure of flow shop scheduling and the shortest path problem. Following the framework introduced by \citet{WHH13}, we can regard our problem as a combination of those two problems.

Finding a simple path between two vertices in a directed graph is a basic problem that can be polynomially solved \citep{AMO93}. Furthermore, if we want to find a path under a certain objective, various optimization problems come within our range of vision. The most famous one is the classic shortest path problem, which can be solved in polynomial time if the graph contains no negative cycle, and otherwise it is NP-hard \citep{AMO93}. Moreover, many optimization problems have a similar structure. For instance, the min-max shortest path problem \citep{KY97} studies a problem with multiple weights associated with each arc, and the objective is to find a directed path between two specific vertices such that the value of the maximum among all its total weights is minimized. The multi-objective shortest path problem \citep{Warburton87} also has multiple weights, but the objective is to find a Pareto optimal path between two specific vertices to satisfy some specific objective function.

Flow shop scheduling is one of the three basic models of multi-stage scheduling (the others are open shop scheduling and job shop scheduling). Flow shop scheduling with the objective of minimizing the makespan is usually denoted by $Fm||C_{max}$, where $m$ is the number of machines. In one of the earliest papers on scheduling problems, \citet{Joh54} showed that $F2||C_{max}$ can be solved in $O(n\log n)$ time, where $n$ is the number of jobs. On the other hand, \citet{GJS1976} proved that $Fm||C_{max}$ is strongly NP-hard for $m \geq 3$.

The contributions of this paper include: (1) a formal description of the considered problem, (2) the argument that the considered problem is NP-hard even if $m = 2$, and NP-hard in the strong sense if $m \geq 3$, and (3) several approximation algorithms.

The rest of the paper is organized as follows. In Section \ref{sec_pre}, we first give a formal definition of the problem stated above, then we briefly review flow shop scheduling and some shortest path problems, and introduce some related algorithms that will be used in the subsequent sections. In Section \ref{sec_com}, we study the computational complexity of the combined problem. Section \ref{sec_approx} provides several approximation algorithms for this problem. Some concluding remarks are provided in Section \ref{sec_end}.

\section{Preliminaries}\label{sec_pre}
\subsection{Problem Description}\label{sec_pd}
We first give a formal definition of our problem, which is a combination of the flow shop scheduling problem and the shortest path problem.

\begin{definition}\label{d_comb}
Given a directed graph $G = (V, A)$ with two distinguished vertices $s, t\in V$ and $m$ flow shop machines, each arc $a_j \in A$ corresponds with a job $J_j\in J$ with processing times $(p_{1j}$, $p_{2j}$, $\cdots$, $p_{mj})$ respectively. The $Fm|\mathrm{shortest}~\mathrm{path}|C_{max}$ problem is to find a $s-t$ directed path $P$, and to schedule the jobs of $J_P$ on the flow shop machines to yield the minimum makespan over all $P$, where $J_P$ denotes the set of jobs corresponding to the arcs in $P$.
\end{definition}

The considered problem is a combination of flow shop scheduling and the classic shortest path problem, mainly because the two optimization problems are special cases of this problem. For example, consider the following instances with $m = 2$. If there is a unique path from $s$ to $t$ in $G$, as shown in the left of Fig. \ref{figc}, our problem is the two-machine flow shop scheduling problem. If all the processing times on the second machine are zero, as shown in the right of Fig. \ref{figc}, then our problem is equivalent to the classic shortest path with respect to the processing times on the first machine. These examples illustrate that these two optimization problems are inherent in the considered problem.
\begin{figure}[ht]
  \includegraphics[width=4.5in]{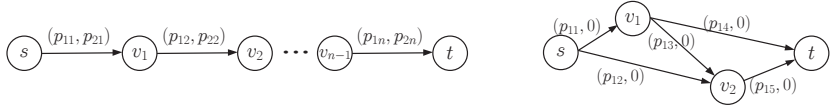}
  \caption{Special cases of our problem}\label{figc}
\end{figure}

In this paper, we will use the results of some optimization problems that have a similar structure with the classic shortest path problem. We introduce the following generalized shortest path problem.

\begin{definition}
Given a directed graph $G = (V, A, w^1, \cdots, w^K)$ and two distinguished vertices $s, t\in V$ with $|A| = n$. Each arc $a_j\in A, j = 1,\cdots,n$ is associated with $K$ weights $w^1_j, \cdots, w^K_j$, and we define the vector $w^k = (w^k_1, w^k_2, \cdots, w^k_n)$ for $k=1, 2, \cdots, K$. The goal of our shortest path problem $SP(G, s, t, f)$ is to find a $s-t$ directed path $P$ that minimizes $f(w^1, w^2, \cdots, w^K, x)$, in which $f$ is a given objective function and $x \in \{0, 1\}^n$ contains the decision variables such that $x_j = 1$ if and only if $a_j\in P$.
\label{d_sp}
\end{definition}

For ease of exposition, we use $SP$ instead of $SP(G,s,t,f)$ when there is no danger of confusion. Notice that $SP$ is a generalization of various shortest path problems. For instance, if we consider $K = 1$ and $f(w^1, x) = w^1\cdot x$, where $\cdot$ is the dot product, this problem is the classic shortest path problem. If $K = 2$ and $f(w^1, w^2, x) = \min\{w^1\cdot x: w^2\cdot x \leq W\}$, where $W$ is a given number, this problem is the shortest weight-constrained path problem \citep{GJ79}. If $f(w^1, w^2, \cdots, w^K, x) = \max\{w^1\cdot x, w^2\cdot x, \cdots, w^K\cdot x\}$, the problem is the min-max shortest path problem \citep{KY97}. In the following sections, we will analyze our combined problem by setting appropriate weights and objective function in $SP$.

\subsection{Algorithms for Flow Shop Scheduling Problems}\label{sec_f2}
First, we introduce some trivial bounds for flow shop scheduling. Denote by $C_{max}$ the makespan of an arbitrary flow shop schedule with job set $J$. A feasible shop schedule is called dense when any machine is idle if and only if there is no job that can be processed at that time on that machine. For arbitrary dense flow shop schedule, we have

\be
C_{max} \geq \max_{i\in\{1,\cdots,m\}}\left\{\sum_{J_j\in J}p_{ij}\right\},\label{eq_max}
\ee
and
\be
C_{max} \leq \sum_{J_j\in J}\sum^m_{i=1}p_{ij}.\label{eq_min}
\ee
For each job, we have
\be
C_{max} \geq \sum^m_{i=1}p_{ij},\qquad \forall J_j \in J.\label{eq_job}
\ee

In flow shop scheduling problems, a schedule is called a permutation schedule if all jobs are processed in the same order on each machine. \citet{Conway1971} proved that there always exists a permutation schedule which is optimal for $F2||C_{max}$ and $F3||C_{max}$. In a permutation schedule, the critical job and critical path are important concepts for the analysis of related algorithms.

Suppose we are given a job set $J$ with $n$ jobs. Let $\sigma = (\sigma(1), \cdots, \sigma(n))$ be a permutation of $(1,\cdots,n)$ for a three-machine (or two-machine) flow shop, and let $\{J_{\sigma(1)}, J_{\sigma(2)}, \cdots, J_{\sigma(n)}\}$ be the corresponding schedule. For simplicity of notation, we denote the permutation and the schedule by $(1, 2, \cdots, n)$ and $\{J_1, J_2, \cdots, J_n\}$ respectively. A directed graph is defined as follows. We define a vertex $(i,j)$ with an associated weight $p_{i,j}$ for each job $J_j$ and each machine $M_i$, for $i = 1, 2, 3$ (or $i = 1, 2$) and $j = 1,2,\cdots,n$. We include arcs leading from each vertex $(i, j)$ towards $(i + 1, j)$, and from $(i, j)$ towards $(i + 1, j + 1)$ for $j = 1, \cdots, n - 1$. The total weight of a maximum weight path from $(1,1)$ to $(3, n)$ (or $(2, n)$), which is called a critical path, is equal to the makespan of the corresponding permutation schedule. For the three-machine case, the critical jobs with respect to $\sigma$ are defined as the jobs $J_u$ and $J_v$ such that $(1,u)$, $(2,u)$, $(2,v)$ and $(3,v)$ appear in the critical path, i.e. the jobs $J_u$ and $J_v$ satisfy
\be
C_{max}  =  \sum^{u}_{j=1} p_{1j} + \sum^{v}_{j=u} p_{2j} + \sum^{n}_{j=v} p_{3j}.
\label{eq_criticaljob3}
\ee
For the two-machine case, the critical job with respect to $\sigma$ is defined as the job $J_{\nu}$ such that $(1,\nu)$ and $(2,\nu)$ appear in the critical path, i.e. the job $J_{\nu}$ satisfies
\be
C_{max}  = \sum^{\nu}_{j=1} p_{1j} + \sum^{n}_{j=\nu} p_{2j}.
\label{eq_criticaljob}
\ee

\citet{Joh54} proposed a sequencing rule for $F2||C_{max}$, which is one of the oldest results of the scheduling literature, and is commonly referred to Johnson's rule.
\begin{algorithm}[htb]
\caption{Johnson's rule}
\label{alg_jo}
\begin{algorithmic}[1]
\STATE Set $S_1 = \{J_j \in J | p_{1j}\leq p_{2j}\}$ and $S_2=\{J_j \in J| p_{1j}> p_{2j}\}$.
\STATE Process the jobs in $S_1$ first in a non-decreasing order of $p_{1j}$, and then schedule the jobs in $S_2$ in a non-increasing order of $p_{2j}$; ties may be broken arbitrarily.
\end{algorithmic}
\end{algorithm}

In Johnson's rule, jobs are scheduled as early as possible. This rule produces a permutation schedule, and Johnson showed that this schedule is optimal. Notice that this schedule is obtained in $O(n\log n)$ time.

For the general problem $Fm||C_{max}$, \citet{Gonzalez1978} first presented an $\lceil \frac{m}{2} \rceil$-approximation algorithm that runs in $O(mn \log{n})$ time by solving $\lceil \frac{m}{2} \rceil$ two-machine flow shop scheduling problems. \citet{Rock1982} proposed an alternative approach by reducing the original problem to an artificial two-machine flow shop problem; this approach is called machine aggregation heuristic. They obtained a permutation by solving the artificial problem in $O(mn + n\log n)$ time, and proved that it has the same performance guarantee of $\lceil \frac{m}{2} \rceil$. Based on the machine aggregation heuristic, \citet{Chen1996} proposed an algorithm for $F3||C_{max}$ with an improved performance guarantee of $\frac{5}{3}$. In the same paper, they also modified the Gonzalez and Sahni's algorithm if $m$ is odd, by partitioning the machines into $\frac{m - 3}{2}$ two-machine flow shop scheduling problems, and one three-machine flow shop scheduling problem which was solved by their $\frac{5}{3}$-approximation algorithm. The modified algorithm has the same performance ratio $\frac{m}{2}$ if $m$ is even, and an improved ratio $\frac{m}{2} + \frac{1}{6}$ if $m$ is odd. It is known that a PTAS exists for $Fm||C_{max}$ \citep{Hall1998}.

We refer to the aggregation heuristic of \citet{Rock1982} as the RS algorithm, and we will use it later to derive an algorithm for our combined problem. The RS algorithm can be described as follows for the three-machine case.
\begin{algorithm}[htb]
\caption{The RS algorithm for $F3||C_{max}$}
\label{alg_rs}
\begin{algorithmic}[1]
\STATE Construct an artificial two-machine flow shop scheduling problem with processing times $a_{j} = p_{1j} + p_{2j}$ on the first machine and $b_{j} = p_{2j} + p_{3j}$ on the second machine for $J_j \in J$. Implement Johnson's rule to obtain an optimal permutation $\sigma$ for the two-machine problem.
\STATE Assign the jobs on the three machines according to $\sigma$ as early as possible. Denote the makespan of this permutation schedule as $C_{max}$.
\RETURN $\sigma$, $C_{max}$.
\end{algorithmic}
\end{algorithm}

The running time of this algorithm is $O(n\log n)$, which is the same as Johnson's rule. Notice that the algorithm returns a permutation schedule, and hence the resulting makespan $C_{max}$ satisfies the equality (\ref{eq_criticaljob3}).

\subsection{Algorithms for Shortest Path Problems}\label{sec_sp}
In this paper, we will use the following two results of the shortest path problems. The first one is the well-known Dijkstra's algorithm, which solves the classic shortest path problem with nonnegative edge weights in $O(|V|^2)$ time \citep{DIJ59}. The second one is a FPTAS result for the min-max shortest path problem, which is presented by \citet{ABV06}. \citet{KY97} first proposed min-max criteria for several problems, including the shortest path problem. \citet{ABV06} studied the computational complexity and proposed several approximation schemes. The min-max shortest path problem with $K$ weights $w^1_j, \cdots, w^K_j$ associated with each arc $a_j$, is to find a path $P$ between two specific vertices that minimizes $\max_{k\in \{1, \cdots,K\}}\sum_{a_j\in P}w^k_j$. It was shown that this problem is NP-hard even for $K = 2$, and that a FPTAS exists if $K$ is a fixed number \citep{Warburton87,ABV06}. The algorithm of \citet{ABV06}, referred to the ABV algorithm in this paper, is based on dynamic programming and scaling techniques. The following result implies that the ABV algorithm is a FPTAS if $K$ is a constant.

\begin{theorem}[\citet{ABV06}]\label{th_minmax}
Given an arbitrary positive value $\epsilon>0$, in a given directed graph with $K$ nonnegative weights associated with each arc, a directed path $P$ between two specific vertices can be found by the ABV algorithm with the property
$$\max_{i\in \{1, 2, \cdots, K\}} \left\{\sum_{a_j\in P}w^i_j\right\} \leq (1+\epsilon) \max_{i\in \{1, 2, \cdots, K\}}\left\{\sum_{a_j\in P'}w^i_j\right\}$$ for any other path $P'$ between the two specific vertices, and the running time is $O(|A||V|^{K + 1}/\epsilon^K)$.
\end{theorem}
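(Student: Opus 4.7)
The plan is to adapt the standard ``dynamic programming plus scaling'' recipe that routinely yields FPTASes for multi-criteria shortest path variants. I view a path $P$ as inducing a $K$-dimensional weight vector $W(P) = (\sum_{a_j\in P} w^1_j, \ldots, \sum_{a_j\in P} w^K_j)$, and the goal is to minimise $\|W(P)\|_\infty$. A label-setting/Bellman-Ford style algorithm can track, at each vertex $v$, the set of Pareto-optimal weight vectors $W(P)$ over $s$-$v$ paths $P$. Without truncation this set may be exponentially large, so I enforce an $\epsilon$-coarse grid on the coordinates to keep it polynomial.

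First, I would produce a crude estimate $L$ of $\mathrm{OPT} := \min_P \|W(P)\|_\infty$. Running Dijkstra's algorithm once per coordinate $k$ with weights $w^k$ returns $K$ single-criterion shortest paths $P^k$; the value $L := \min_k \|W(P^k)\|_\infty$ satisfies $\mathrm{OPT} \leq L \leq K\cdot \mathrm{OPT}$, because the optimum path has coordinate-$k$ weight at most $\mathrm{OPT}$ for each $k$, and the $\max$ over $K$ coordinates is at most $K$ times the best single-criterion cost. A sharper bound, if required, can be obtained by bisecting on a guessed value of $\mathrm{OPT}$.

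Next, I fix the scaling factor $\delta := \epsilon L / (K|V|)$ (up to a constant factor) and define rounded weights $\tilde w^k_j := \lfloor w^k_j / \delta \rfloor$. Any simple path of interest satisfies $\sum_{a_j\in P}\tilde w^k_j = O(|V|/\epsilon)$, because paths exceeding this threshold on some coordinate cannot be near-optimal and may be discarded. I then run a dynamic program: for each vertex $v$ and each vector $\tilde w\in\{0,1,\ldots,O(|V|/\epsilon)\}^K$, I store a boolean recording whether some $s$-$v$ path achieves $\tilde w$, together with a predecessor pointer for reconstruction. Transitions extend a label along an arc $(v,u)$ by adding $(\tilde w^1_j,\ldots,\tilde w^K_j)$ componentwise. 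Since the table has $|V|\cdot O(|V|/\epsilon)^K$ entries and each of the $|A|$ arcs must be relaxed $O(|V|)$ times in the worst case, the total running time comes out to $O(|A||V|^{K+1}/\epsilon^K)$, matching the claim.

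For correctness, a standard rounding argument shows that, on any simple path of length at most $|V|$, replacing $w^k_j$ by $\delta \tilde w^k_j$ distorts each coordinate by at most $|V|\delta \leq \epsilon L / K \leq \epsilon\cdot\mathrm{OPT}$, so the path returned by the DP has $\|W(P)\|_\infty \leq (1+\epsilon)\mathrm{OPT}$. The main obstacle will be calibrating the scaling parameter so that both the approximation guarantee and the promised running time come out simultaneously: if the preliminary estimate $L$ is only a $K$-approximation, the analysis must absorb this factor into $\delta$ without blowing up the state space. A secondary subtlety is certifying that discarding labels above the truncation threshold still preserves at least one near-optimal path, which follows from the observation that the rounded vector of the optimum lies within the retained grid.
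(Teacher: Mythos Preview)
The paper does not prove this theorem at all: it is quoted verbatim as a result of \citet{ABV06} and used as a black box in the analysis of the PAR algorithm. There is therefore nothing to compare your proposal against within the present paper; the authors simply invoke the statement and its running-time bound.

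That said, your sketch is the standard FPTAS recipe and is broadly in line with how such results are obtained in the min-max/multi-objective shortest path literature. One point deserves tightening: your claim that $L := \min_k \|W(P^k)\|_\infty \leq K\cdot\mathrm{OPT}$ does not follow from the argument you give. The single-criterion shortest path $P^k$ is only guaranteed to have small $k$-th coordinate; its other coordinates can be arbitrarily large, so $\|W(P^k)\|_\infty$ need not be bounded by any multiple of $\mathrm{OPT}$. A correct crude estimate is obtained, for instance, by taking the shortest path with respect to the aggregated weight $\sum_k w^k_j$: the resulting path $\bar P$ satisfies $\|W(\bar P)\|_\infty \leq \sum_k w^k(\bar P) \leq \sum_k w^k(P^\ast) \leq K\cdot\mathrm{OPT}$, and clearly $\|W(\bar P)\|_\infty \geq \mathrm{OPT}$. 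With this fix (or with the bisection you already mention as a fallback), the scaling-plus-DP argument goes through and matches the claimed $O(|A||V|^{K+1}/\epsilon^K)$ bound.
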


\section{Computational Complexity of $Fm|\mathrm{shortest}~\mathrm{path}|C_{max}$}\label{sec_com}
In this section, we study the computational complexity of our problem.
First, it is straightforward that $Fm|\mathrm{shortest}~\mathrm{path}|C_{max}$ is NP-hard in the strong sense if $m \geq 3$, as a consequence of the fact that $Fm||C_{max}$ is a special case of our problem.

On the other hand, although $F2||C_{max}$ and the classical shortest path are polynomially solvable, we argue that $F2|\mathrm{shortest}~\mathrm{path}|C_{max}$ is NP-hard. We prove this result by using a reduction from the NP-complete problem \textsc{partition} \citep{GJ79}. Our proof is similar to the well-known NP-hardness proof for the shortest weight-constrained path problem \citep{Batagelj2000}.
\begin{theorem}
$Fm|\mathrm{shortest}~\mathrm{path}|C_{max}$ is NP-hard even if $m = 2$, and is NP-hard in the strong sense for $m \geq 3$. \label{th_npc}
\end{theorem}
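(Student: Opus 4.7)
The theorem decomposes into two claims. For the strong NP-hardness when $m\geq 3$, the plan is a one-line reduction from $F3||C_{max}$, which is strongly NP-hard by the result of Garey, Johnson and Sethi cited above: given an instance with jobs $J_1,\dots,J_n$, build a graph consisting of a single $s$-$t$ directed path $v_0\to v_1\to\cdots\to v_n$ and attach to the $j$-th arc the processing-time vector of $J_j$. The unique feasible path forces the combined problem to coincide exactly with the flow shop instance, and the reduction is clearly polynomial and value-preserving.

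For $m=2$, the plan is to reduce from \textsc{Partition}, mirroring the classical gadget used for the shortest weight-constrained path problem. Given positive integers $a_1,\dots,a_n$ with $\sum_i a_i = 2B$, I would introduce vertices $v_0,\dots,v_n$ with $s=v_0$ and $t=v_n$, and between $v_{i-1}$ and $v_i$ place two parallel arcs: an ``$A$-arc'' carrying a job with processing times $(a_i,0)$ and a ``$B$-arc'' carrying a job with processing times $(0,a_i)$. (If the model forbids multi-arcs, subdivide each pair with a fresh auxiliary vertex and attach a zero-time job to one half.) Every $s$-$t$ path is then in bijection with a subset $S\subseteq\{1,\dots,n\}$, namely the set of indices for which the $A$-arc is chosen.

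The crux is to show that the optimal $F2||C_{max}$ value on the jobs selected by $S$ is exactly
\[
\max\Bigl\{\sum_{i\in S} a_i,\ \sum_{i\notin S} a_i\Bigr\}.
\]
Because every selected job has either $p_{1j}=0$ or $p_{2j}=0$, Johnson's rule (Algorithm~\ref{alg_jo}) schedules all $B$-arc jobs first and all $A$-arc jobs last, and unrolling the resulting permutation shows that machine~1 is busy only during the $A$-block while machine~2 is busy only during the $B$-block, with neither block introducing idle time that extends the other. The lower bound $C_{max}\geq\max_i \sum_j p_{ij}$ from (\ref{eq_max}) matches this value from below, yielding the equality. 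Minimizing this expression over $S$ is the optimization version of \textsc{Partition}: the minimum is always at least $B$ and equals $B$ iff the instance admits a balanced split. Hence the combined problem attains optimal makespan $B$ iff the \textsc{Partition} instance is a yes-instance. I expect the main obstacle to be pinning down the makespan formula cleanly; once one observes that every job loads only one of the two machines the idle-time bookkeeping collapses and the reduction follows at once.
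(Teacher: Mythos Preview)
Your proposal is correct and follows essentially the same approach as the paper: the strong NP-hardness for $m\geq 3$ is inherited from $Fm||C_{max}$ via a single-path graph, and the $m=2$ case uses the identical \textsc{Partition} gadget with two parallel arcs of processing times $(a_i,0)$ and $(0,a_i)$ between consecutive vertices. Your write-up actually spells out the makespan computation via Johnson's rule and the lower bound~(\ref{eq_max}) more carefully than the paper, which simply asserts that the equivalence is ``not difficult to check.''
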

\begin{figure}[ht]
  \centering
  % Requires \usepackage{graphicx}
  \includegraphics[width=4.5in]{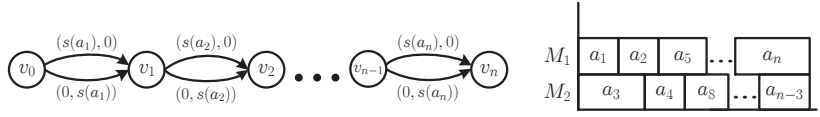}\\
  \caption{The reduction from \textsc{partition} to $F2|\mathrm{shortest}~\mathrm{path}|C_{max}$}\label{figc2}
\end{figure}
\begin{proof}We only need to prove the first part. It is easy to see that the decision version of $F2|\mathrm{shortest}~\mathrm{path}|C_{max}$ belongs to NP. Consider an arbitrary instance of \textsc{partition} with $S = \{a_1, \cdots, a_n\}$ with size $s(a_k)\in \mathbb{Z}^+$ for each $k$, and let $C = \sum_{a\in S}s(a)/2$. We now construct the directed graph $G = (V, A, W)$ and the corresponding jobs. The graph has $n + 1$ vertices $v_0, v_1, \cdots, v_n$, each pair of $(v_k, v_{k+1}),k=0,1,\cdots,n-1,$ is joined by two parallel arcs (jobs) with processing times $(s(a_{k+1}), 0)$ and $(0, s(a_{k+1}))$ respectively, both leading from vertex $v_k$ towards $v_{k+1}$ (see the left of Fig. \ref{figc2}). We wish to find the jobs corresponding to a path from $v_0$ to $v_{n+1}$. It is not difficult to check that there is a feasible schedule with makespan not more than $C$ if and only if there is a partition of set $S$ (see the right of Fig. \ref{figc2}). Therefore, the decision version of $F2|\mathrm{shortest}~\mathrm{path}|C_{max}$ is $NP$-complete. $\Box$
\end{proof}

\section{Approximation Algorithms}\label{sec_approx}
\subsection{An intuitive Algorithm}\label{sec_alg_nat}
To start off, we propose an intuitive algorithm for $Fm|\mathrm{shortest}~\mathrm{path}|C_{max}$. The main idea of this algorithm is to set $K = 1$ and $f = w^1 \cdot x$ in $SP$, i.e. to find a classical shortest path with one specific set of weights. An intuitive setting is $w^1_j = \sum^m_{i = 1}p_{ij}$ for each arc. We find the shortest path with respect to $w^1$ by Dijkstra's algorithm, and then schedule the corresponding jobs on the flow shop machines. We refer to this algorithm as the FD algorithm. The subsequent analysis will show that the performance ratio of the FD algorithm remains the same for an arbitrarily selected flow shop scheduling algorithm that provides a dense schedule, regardless of the performance ratio of the algorithm.

\begin{algorithm}[htb]
\caption{The FD algorithm}
\label{alg_1}
\begin{algorithmic}[1]
\STATE Find the shortest path in $G$ with weights $w^1_j := \sum^m_{i = 1}p_{ij}$ by Dijkstra's algorithm. For the returned path $P$, construct the job set $J_P$.
\STATE Obtain a dense schedule of the jobs of $J_P$ by an arbitrary flow shop scheduling algorithm. Let $\sigma$  be the returned job schedule and $C_{max}$ the returned makespan, and denote the job set $J_P$ by $S$.
\RETURN $S$, $\sigma$ \AND $C_{max}$
\end{algorithmic}
\end{algorithm}

It is straightforward that the total running time of the FD algorithm is $O(|V|^2 + T(m, n))$, where $T(m, n)$ is the running time of the flow shop scheduling algorithm. Therefore, suppose the flow shop scheduling algorithm we used is polynomial time, then the FD algorithm is polynomial time even if $m$ is an input of the instance. Before we analyze the performance of this algorithm, we first introduce some notations. Let $J^*$ be the set of jobs in an optimal solution, and $C^*_{max}$ be the corresponding makespan, and let $S$ and $C_{max}$ be those returned by the FD algorithm respcetively.
\begin{theorem}
The FD algorithm is $m$-approximate, and this bound is tight.
\end{theorem}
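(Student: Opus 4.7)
The plan is to prove $C_{max}\le m\cdot C^*_{max}$ through the three-link chain
\[
C_{max} \;\le\; \sum_{J_j\in S} w^1_j \;\le\; \sum_{J_j\in J^*} w^1_j \;\le\; m\cdot C^*_{max}.
\]
The first inequality is (\ref{eq_min}) applied to the dense schedule returned on the job set $S$, noting that $w^1_j=\sum_{i=1}^{m}p_{ij}$. The second inequality follows because Dijkstra's algorithm is exact under the nonnegative weights $w^1$, with the $s$--$t$ path underlying $J^*$ as a feasible competitor. The third inequality is (\ref{eq_max}) applied to the optimal schedule: $C^*_{max}\ge \max_i\sum_{J_j\in J^*}p_{ij}\ge \tfrac{1}{m}\sum_{J_j\in J^*}\sum_{i=1}^{m}p_{ij}$. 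I would highlight that this argument never invokes any approximation guarantee of the inner scheduler, only that it returns a dense schedule, which is the property the theorem emphasises.

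\textbf{Tightness.} The plan is to design an instance with two parallel $s$--$t$ routes tuned so that FD is lured into picking the worse one. Take $m+1$ vertices $s=v_0,v_1,\ldots,v_{m-1},v_m=t$, a direct arc $s\to t$ with processing times $(1,1,\ldots,1)$, and an $m$-arc path $s\to v_1\to\cdots\to v_{m-1}\to t$ whose $i$-th arc has processing time $1$ on machine $i$ and $0$ on every other machine. Both routes carry total $w^1$-weight exactly $m$, so under adversarial tie-breaking Dijkstra returns the direct arc, whose singleton flow shop schedule has makespan $\sum_{i=1}^{m} 1 = m$. For the long path I would schedule the jobs in the reverse permutation $J_m,J_{m-1},\ldots,J_1$ and argue the makespan is exactly $1$, so that $C^*_{max}\le 1$ against $C_{max}=m$ delivers the ratio $m$ exactly. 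If one dislikes the tie-breaking assumption, shrinking each processing time on the direct arc to $1-\epsilon/m$ makes Dijkstra prefer it unambiguously and yields ratios tending to $m$ as $\epsilon\to 0$.

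\textbf{Main obstacle.} The only delicate step is the makespan computation on the long path. A priori, even though each machine has total load $1$, the flow shop could idle while waiting for preceding completions; the reverse-order permutation is chosen precisely so that every job completes all its zero operations before its unique nonzero one is ready. I would certify this by unrolling the standard permutation-schedule recurrence $C_{ij}=\max\{C_{i,j-1},C_{i-1,j}\}+p_{ij}$ along the schedule and observing that the critical path has total weight $1$, since at machine $i$ only the job originally indexed by $i$ (sitting at position $m-i+1$) carries any work.
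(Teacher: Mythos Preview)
Your proof is correct and follows essentially the same approach as the paper: the upper bound chains (\ref{eq_min}), shortest-path optimality, and (\ref{eq_max}) exactly as the paper does, and your tightness instance (direct arc $(1,\ldots,1)$ versus an $m$-arc staircase path) is the same construction up to where the $\epsilon$-perturbation is placed---the paper perturbs the long path to get $C^*_{max}=1+\epsilon$, while you perturb the short arc to force Dijkstra's choice, which is equivalent. Your verification that the reverse permutation on the staircase achieves makespan $1$ is a nice extra detail the paper leaves to the figure.
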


\begin{proof}
By the lower bound (\ref{eq_max}) introduced in Section \ref{sec_f2}, we have
\begin{equation}\label{eq:pf_hd_1}
mC^*_{max}  \geq  \sum_{J_j\in J^*}\sum^{m}_{i = 1}p_{ij}. \end{equation}

Since the returned path is a shortest path with respect to $w^1$, by (\ref{eq_min}) we have
\be\label{eq:pf_hd_2}
C_{max}\leq \sum_{J_j\in S}\sum^m_{i=1}p_{ij} = \sum_{J_j\in S} w^1_j \leq \sum_{J_j\in J^*} w^1_j = \sum_{J_j\in J^*}\sum^m_{i=1}p_{ij}.
\ee

By combining (\ref{eq:pf_hd_1}) with (\ref{eq:pf_hd_2}), it follows that $C_{max} \leq mC^{*}_{max}.$

\begin{figure}[ht]
  \centering
  \includegraphics[width=4in]{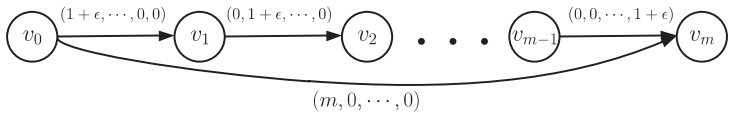}\\
  \caption{Tight example for the FD algorithm}\label{figfdtight}
\end{figure}

Consider the instance shown in Fig. \ref{figfdtight}. We wish to find a path from vertex $v_0$ to $v_m$. The makespan returned by the FD algorithm is $C_{max} = m$ with the arc $(v_0, v_m)$, whereas the makespan of an optimal schedule is $C_{max}^* = 1 + \epsilon$ with the other arcs. Notice that there is only one job in the returned solution, hence the returned makespan remains $m$ regardless of the algorithm used for the flow shop scheduling. The bound is tight because $\frac{C_{max}}{C_{max}^*}\rightarrow m$ when $\epsilon \rightarrow 0$.  $\Box$
\end{proof}

\subsection{An Improved Algorithm for Fixed $m$}\label{sec_alg_im}
In this subsection, we assume that $m$, the number of flow shop machines, is a constant. Instead of finding an optimal shortest path from $s$ to $t$ with respect to specific weights, we implement the ABV algorithm mentioned in Section \ref{sec_sp}, which will return a $(1+\epsilon)$-approximated solution for the min-max shortest path problem. In other words, we will set $K=m$ and use the objective function $f = \max\{w^1\cdot x, w^2\cdot x, \cdots, w^K\cdot x\}$ in $SP$, where the wights $w^1, w^2, \cdots, w^K$ will be decided later.

Inspired by the work of \citet{Gonzalez1978} and \citet{Chen1996}, we proceed as follows: after obtaining a feasible path by the ABV algorithm, we schedule the corresponding jobs by partitioning the $m$ machines into several groups. Denote the machine as $M_i$, $i = 1, \cdots, m$ (indexed following the routing of the flow shop). More specifically, we partition the $m$ machines into $m_3$ groups of three consecutive machines in the routing $M_{3i-2}$, $M_{3i-1}$, $M_{3i}$ ($i=1,\cdots, m_3$), $m_2$ groups of two consecutive machines in the routing $M_{3m_3 + 2i-1}$, $ M_{3m_3 + 2i}$ ($i=1,\cdots, m_2$), and $m_1$ individual machines $M_{3m_3 + 2m_2 + i}$ ($i=1,\cdots, m_1$), in which the value of $m_1$, $m_2$, $m_3$ will be derived later. For the three-machine subproblems on $M_{3i-2}$, $M_{3i-1}$ and $M_{3i}$ ($i=1,\cdots, m_3$), we implement the RS algorithm to obtain the permutations. For the two-machine subproblems on $M_{3m_3 + 2i-1}$ and $M_{3m_3 + 2i}$ ($i=1,\cdots, m_2$), we implement Johnson's rule to obtain the permutations. The permutations for the single-machine subproblems are arbitrary. Then we form a schedule for the original $m$-machine problem, in which the sequences of jobs on machines $M_i$ are the permutations obtained above, and are executed as early as possible. Notice the property that an optimal schedule is always a permutation schedule only stands for $F2||C_{max}$ and $F3||C_{max}$, and the performance guarantee relies on the properties of critical jobs as we will see in the subsequent analysis. The reason why we partition the $m$ machines in this particular fashion is related to this fact, as will be explained below.

The main idea of our algorithm is described as follows. We initially set the weights $(w^1_j, w^2_j, \cdots, w^m_j)= (p_{1j}, p_{2j}, \cdots, p_{mj})$. The algorithm iteratively runs the ABV algorithm and the above partition scheduling algorithm (the values of $m_1, m_2, m_3$  will be decided later) by adopting the following revision policy: in a current schedule, if there exists a job whose weight is large enough with respect to the current makespan, we will modify the weights of arcs corresponding to large jobs to $(M, M, \cdots, M)$, where $M$ is a sufficient large number, and then mark these jobs. The algorithm terminates if no such job exists. Another termination condition is that a marked job appears in a current schedule. We return the schedule with minimum makespan among all current schedules as the solution of the algorithm. We refer to this algorithm as the PAR algorithm. Notice that the weights of arcs may vary in each iteration, whereas the processing times of jobs remain the same throughout this algorithm.

Before we formally state the PAR algorithm, we first provide more details about the parameter choices. For $m = 2$ and $m = 3$, by following the subsequent analysis of the performance of this algorithm, one can verify that the best possible performance ratio is $\frac{3}{2}$ and $2$ respectively. An intuitive argument is that the best possible performance ratio for the general case of the PAR algorithm is $\rho= m_1+ \frac{3}{2}m_2 + 2m_3$. For a given $m$, as $m_1, m_2, m_3$ are nonnegative integers, our task is to minimize $m_1+ \frac{3}{2}m_2 + 2m_3$ such that $m_1 + 2m_2 + 3m_3 = m$. A simple calculation yields the following result:

\begin{eqnarray}\label{alg_pfar_p1}
(m_1, m_2, m_3)=\left\{ \begin{array}{ll}
           (0, 0, \frac{m}{3}) & \mathrm{~if~} m = 0\pmod{3} ,\\
           (1, 0, \frac{m-1}{3}) & \mathrm{~if~} m = 1 \pmod{3},\\
           (0, 1, \frac{m-2}{3}) & \mathrm{~if~} m = 2 \pmod{3},
          \end{array}\right.
\end{eqnarray}
and
\begin{eqnarray}\label{alg_pfar_p2}
\rho=\left\{ \begin{array}{ll}
           \frac{2m}{3} & \mathrm{~if~} m = 0 \pmod{3},\\
           \frac{2m + 1}{3} & \mathrm{~if~} m = 1 \pmod{3},\\
           \frac{4m + 1}{6} & \mathrm{~if~} m = 2 \pmod{3}.
          \end{array}\right.
\end{eqnarray}

In other words, the best way is to partition the machines in such a way that we have a maximum number of three-machine subsets. The pseudocode of the PAR algorithm is described by Algorithm 4.

\begin{algorithm}[htb]
\caption{The PAR algorithm}
\label{alg_pfmar}
\begin{algorithmic}[1]
\STATE Derive $(m_1,m_2,m_3)$ and $\rho$ using (\ref{alg_pfar_p1}) and (\ref{alg_pfar_p2}).

\STATE Initially, $(w^1_j, w^2_j, \cdots, w^m_j) := (p_{1j}, p_{2j}, \cdots, p_{mj})$ for each arc $a_j\in A$ corresponding to $J_j\in J$.

\STATE Given $\epsilon >0$, implement the ABV algorithm to obtain a path $P$ for $SP$, and construct the corresponding job set as $J_P$.

\STATE Partition the $m$ machines: $m_3$ three-machines subsets ($M_{3i-2}, M_{3i-1}, M_{3i}$, $i=1,\cdots, m_3$); one two-machine subsets ($M_{m-1}$ and $M_{m}$) if $m_2 = 1$; one single-machine subset ($M_{m}$) if $m_1 = 1$. \label{alg_pfar_lseq}

\STATE Run RS algorithm to obtain the permutations for these three-machine flow shops, and Johnson's rule to obtain the permutation for the two-machine flow shop. Let the sequence of the single-machine problem be arbitrary.

\STATE For the original problem, schedule the jobs of $J_P$ according to those permutations on each machine as early as possible. Denote the returned makespan as $C'_{max}$, and the job schedule as $\sigma'$. \label{alg_pfar_lseq2}

\STATE $S : = J_P$, $\sigma: = \sigma'$, $C_{max}:=C'_{max}$, $D:=\emptyset$, $M : = (1+\epsilon)\sum_{J_j\in J}\sum_{i = 1}^{m}p_{ij} + 1$.

\WHILE{$J_P \cap D = \emptyset $ \AND there exists a job $J_j$ in $J_P$ such that $\sum_{i=1}^mp_{ij} > \frac{C'_{max}}{\rho}$}
    \FOR {all jobs with $\sum_{i=1}^mp_{ij} > \frac{C'_{max}}{\rho}$ in $J\backslash D$}
        \STATE $(w^1_j, w^2_j, \cdots, w^m_j) := (M, M, \cdots, M)$, $D := D\cup \{J_j\}$.
    \ENDFOR
    \STATE Implement the ABV algorithm to obtain a path $P$ to $SP$, and construct the corresponding job set as $J_P$.

    \STATE Schedule the jobs of $J_P$ by the rule described in lines \ref{alg_pfar_lseq} -- \ref{alg_pfar_lseq2}.

    \IF{$C'_{max} < C_{max}$}
        \STATE $S : = J_P$, $\sigma: = \sigma'$, $C_{max}:=C'_{max}$.
    \ENDIF
\ENDWHILE
\RETURN $S$, $\sigma$ \AND $C_{max}$.
\end{algorithmic}
\end{algorithm}

It is easy to see that the PAR algorithm will return a feasible solution of $Fm|\mathrm{shortest}~\mathrm{path}|C_{max}$. We now discuss its computational complexity. Let the total number of jobs be $|A| = n$. Notice that the weights of arcs can be revised at most $n$ times. It is straightforward that the total running time of the PAR algorithm is $O(n^2|V|^{m + 1}/\epsilon^m + mn^2\log n)$, since there are at most $n$ iterations, in which the running time of the ABV algorithm is $O(n|V|^{m + 1}/\epsilon^m)$ and scheduling takes $O(mn\log n)$ time. If $m$ and $\epsilon$ are fixed numbers, then the PAR algorithm is a polynomial time algorithm.

Let $J^*$ be the set of jobs in an optimal solution, and $C^*_{max}$ the corresponding makespan, and let $S$ and $C_{max}$ be those returned by the PAR algorithm respectively. The following theorem shows the performance of the PAR algorithm.
\begin{theorem}
Given $\epsilon >0$, the worst-case ratio of the PAR algorithm for $Fm|\mathrm{shortest}~\mathrm{path}|C_{max}$ is
\begin{eqnarray}
(1 + \epsilon)\rho =\left\{ \begin{array}{ll}
           (1+\epsilon)\frac{2m}{3}& \mathrm{~if~} m = 0 \pmod{3},\\
           (1+\epsilon)\frac{2m + 1}{3} & \mathrm{~if~} m = 1 \pmod{3},\\
           (1+\epsilon)\frac{4m + 1}{6}& \mathrm{~if~} m = 2 \pmod{3}.
          \end{array}\right.
\end{eqnarray}
\end{theorem}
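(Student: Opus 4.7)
The plan is a case analysis on how the while-loop of the PAR algorithm terminates. First I would record two key observations: $C_{max}$ is nonincreasing across iterations (by the update rule), so it suffices to bound $C'_{max}$ for some iteration; and by the choice $M=(1+\epsilon)\sum_j\sum_i p_{ij}+1$ together with the lower-bound inequality (\ref{eq_min}), the ABV algorithm can return a path containing an arc of $D$ only when every $s$--$t$ path already does.

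Next I would split on whether an arc of $J^*$ lies in $D$ at the terminating iteration. If some $J_j\in J^*$ was marked in an earlier iteration $t'$, the marking criterion $\sum_i p_{ij}>C'_{max,t'}/\rho$ combined with the job bound (\ref{eq_job}) gives $C^*_{max}\geq\sum_i p_{ij}>C'_{max,t'}/\rho$, so by monotonicity $C_{max}\leq C'_{max,t'}<\rho\,C^*_{max}$. The same estimate applies if the loop exits through $J_P\cap D\neq\emptyset$, because the $M$-argument combined with Theorem~\ref{th_minmax} forces $J^*$ to contain a marked arc.

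The substantive case is normal termination with $J^*\cap D=\emptyset$. There, every arc of $J^*$ carries its original weight, so by (\ref{eq_max}), $\max_k\sum_{J_j\in J^*}w^k_j\leq C^*_{max}$, and Theorem~\ref{th_minmax} together with the $M$-argument (which forces $J_P\cap D=\emptyset$) yields $\max_i\sum_{J_j\in J_P}p_{ij}\leq(1+\epsilon)C^*_{max}$; the termination condition supplies the complementary bound $\max_{J_j\in J_P}\sum_i p_{ij}\leq C'_{max}/\rho$. A standard pipelining argument for as-early-as-possible scheduling then yields $C'_{max}\leq\sum_k C^{(k)}$, where $C^{(k)}$ is the local makespan of block $k$ run in isolation. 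Each $C^{(k)}$ is bounded via its critical-job identity: exactly $\sum_{J_j\in J_P}p_{ij}$ for a single-machine block, and, for two- and three-machine blocks, expressions of the form $\alpha_k\cdot\max_i\sum_{J_j\in J_P}p_{ij}+\beta_k\cdot\max_{J_j\in J_P}\sum_i p_{ij}$ derived from (\ref{eq_criticaljob}) and (\ref{eq_criticaljob3}). Summing across blocks gives a combined inequality whose coefficients are linear in $m_1,m_2,m_3$; the partition choice (\ref{alg_pfar_p1})--(\ref{alg_pfar_p2}) is precisely the one making these coefficients close the algebra, after the two displayed bounds are substituted and $C'_{max}$ is isolated, to $C'_{max}\leq(1+\epsilon)\rho\,C^*_{max}$.

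The hard part will be getting the three-machine RS block bound sharp enough. The naive estimate $C_{RS}\leq\bigl(\sum_{J_j\in J_P}p_{3i-2,j}+\sum_{J_j\in J_P}p_{3i-1,j}\bigr)+\max_{J_j\in J_P}(p_{3i-1,j}+p_{3i,j})$ leaves a $\beta_k$ too large to close the algebra back to $\rho$; the intended fix, in the spirit of the averaging that produces the $\tfrac32$ factor in the two-machine Johnson analysis via (\ref{eq_criticaljob}), is to symmetrize with the mirror bound $C_{RS}\leq\max_{J_j\in J_P}(p_{3i-2,j}+p_{3i-1,j})+\bigl(\sum_{J_j\in J_P}p_{3i-1,j}+\sum_{J_j\in J_P}p_{3i,j}\bigr)$ obtained from Johnson's upper bound on the RS artificial two-machine problem. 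Once both bounds are averaged and the partition choice (\ref{alg_pfar_p1}) is applied, the final rearrangement to extract the claimed ratio $(1+\epsilon)\rho$ is routine.
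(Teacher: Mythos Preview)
Your overall architecture---the split on whether $J^*\cap D$ is empty, the use of the ABV guarantee together with the termination condition $\sum_i p_{ij}\le C'_{max}/\rho$, and the block decomposition $C'_{max}\le\sum_k C^{(k)}$---matches the paper and is correct. The gap is in the three-machine block bound.

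Your ``naive'' and ``mirror'' estimates are both of the form
\[
C_{RS}\;\le\;\bigl(\textstyle\sum_j p_{a,j}+\sum_j p_{b,j}\bigr)+\max_j\bigl(p_{c,j}+p_{d,j}\bigr),
\]
so each contributes two machine-sums and one job-sum; averaging them still contributes two machine-sums (the cross terms do not cancel) and one job-sum. In the notation $A=\max_i\sum_j p_{ij}$, $B=\max_j\sum_i p_{ij}$, the best you can squeeze out of the average is $C_{RS}\le 2A+B$, i.e.\ coefficients $(\alpha,\beta)=(2,1)$. Plugging that into your summation, say for $m\equiv 0\pmod 3$ (so $m_1=m_2=0$, $m_3=m/3$, $\rho=2m/3$), gives
\[
C'_{max}\le m_3\bigl(2(1+\epsilon)C^*_{max}+C'_{max}/\rho\bigr)
\;\Longrightarrow\;
C'_{max}\le \tfrac{4m}{3}(1+\epsilon)C^*_{max}=2\rho(1+\epsilon)C^*_{max},
\]
which is a factor~$2$ off. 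The algebra does not close to $(1+\epsilon)\rho$ with a $(2,1)$-type three-machine bound, so the ``routine rearrangement'' you promise at the end cannot work.

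What the paper does instead is a further case split on the critical jobs $J_u,J_v$ of each three-machine block. When $u=v$, the middle machine contributes only a single job, and Johnson's ordering on the RS aggregation (which forces $p_{i_3,j}\ge p_{i_3+2,j}$ for all jobs after the critical one, or the symmetric inequality) collapses the two outer sums into one, yielding $C^3_{max}\le \sum_j p_{i_3,j}+(p_{i_3,u}+p_{i_3+1,u}+p_{i_3+2,u})\le A+B$, i.e.\ type $(1,1)$. When $u<v$, the same Johnson ordering argument kills the third-machine tail and one gets $C^3_{max}\le \sum_j p_{i_3,j}+\sum_j p_{i_3+1,j}\le 2A$, i.e.\ type $(2,0)$. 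Each of these, separately, closes the algebra to $(1+\epsilon)\rho$ after substituting~(\ref{alg_pfar_p1})--(\ref{alg_pfar_p2}). Your symmetrization idea is natural but cannot replace this case distinction: the essential saving comes from exploiting the Johnson ordering of the RS permutation together with the dichotomy $u=v$ versus $u<v$, not from averaging two symmetric upper bounds.
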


\setcounter{case}{0}
\begin{proof}
We will distinguish two different cases: $J^*\cap D \neq \emptyset$ and $J^*\cap D = \emptyset$.

\begin{case} $J^*\cap D \neq \emptyset$

In this case, there is at least one job in the optimal solution, say $J_j$, such that $C'_{max} < \rho\sum_{i = 1}^{m}p_{ij}$ holds for a current schedule with makespan $C'_{max}$ during the execution. Notice that the schedule returned by the PAR algorithm is the schedule with minimum makespan among all current schedules, and we have $C_{max} \leq C'_{max}$. It follows from (\ref{eq_job}) that
\be
\begin{split}
C_{max}  \leq C'_{max} &< \rho\sum_{i = 1}^{m}p_{ij} \leq \rho C^*_{max}.
\end{split}
\ee

\end{case}
\begin{case}
$J^*\cap D = \emptyset$

Consider the last current schedule during the execution of the algorithm. Denote the corresponding job set and the makespan as $J'$ and $C'_{max}$ respectively.

In this case, we first argue that $J' \cap D = \emptyset$. Suppose that this is not the case, i.e. $J' \cap D \neq \emptyset$. Since $J^*\cap D = \emptyset$, we know the weights of arcs corresponding to the jobs in $J^*$ have not been revised. Hence we have $ (1+\epsilon)\max_{i\in\{1,\cdots, m\}} \left\{\sum_{J_j\in J^*}w^i_j\right\} < M$. Moreover, by the assumption $J' \cap D \neq \emptyset$, we have $\max_{i\in\{1,\cdots, m\}} \left\{\sum_{J_j\in J'}w^i_j \right\} \geq M$. By Theorem \ref{th_minmax}, the solution returned by the ABV algorithm satisfies
\begin{equation*}
M \leq \max_{i\in\{1,\cdots, m\}} \left\{\sum_{J_j\in J'}w^i_j \right\} \leq (1+\epsilon)\max_{i\in\{1,\cdots, m\}} \left\{\sum_{J_j\in J^*}w^i_j\right\} < M,
\end{equation*} which leads to a contradiction.

Remember that in the PAR algorithm, the machines are divided into three parts, namely three-machines subsets together with at most one two-machine subset or a single machine. We solve these subproblems by the RS algorithm, Johnson's rule and an arbitrary algorithm respectively. It is clear that the sum of the makespans of those schedules is an upper bound for $C'_{max}$. Denote $C^2_{max}$ and $J^2_{\nu}$ as the makespan and the critical job of the two-machine subproblem returned by Johnson's rule, and let the corresponding machines be $M_{i_2}, M_{i_2 + 1}$. Denote $C^3_{max}$ and $J^3_u$, $J^3_v$ as the makespan and the critical jobs returned by the RS algorithm for the three-machine subproblems with largest makespan, and let the machines be $M_{i_3}, M_{i_3 + 1}, M_{i_3 + 2}$. Denote the single machine as $M_{i_1}$, on which the total processing time is $\sum_{J_j \in J'}p_{i_1j}$.

For the two-machine case, suppose that $p_{i_2,\nu} \geq p_{i_2+1,\nu}$. Noticing that $p_{i_2,j}\geq p_{i_2+1,j}$ for the job scheduled after $J_{\nu}$ in the schedule returned by Johnson's rule and form (\ref{eq_criticaljob}), it follows that
\be
C^2_{max} \leq \sum_{J_j\in J'} p_{i_2,j} +  p_{i_2+1,\nu} \leq \sum_{J_j\in J'} p_{i_2,j} +  \frac{1}{2}(p_{i_2,\nu} + p_{i_2+1,\nu}). \label{eq_crit2}
\ee

For the three-machine case, we study two subcases corresponding with $u = v$ and $u < v$ for the critical jobs.

\begin{subcase}
$u = v$.

Consider the schedule with respect to $C^3_{max}$. We can rewrite (\ref{eq_criticaljob3}) as
\be
C^3_{max} = \sum^{u-1}_{j=1}p_{i_3,j} + p_{i_3,u} + p_{i_3 + 1, u} + p_{i_3 + 2,u} + \sum^{n}_{j=u+1}p_{i_3+2, j}. \label{alg_rf3ar_1}
\ee

Suppose that the processing times of the critical jobs of the three-machine subproblem satisfy $p_{i_3,u} \geq p_{i_3+2,u}$, thus we have $p_{i_3u} + p_{i_3+1,u} \geq p_{i_3+1,u} + p_{i_3+2,u}$, i.e. $a_{u} \geq b_{u}$ for the artificial two-machine flow shop in the RS algorithm. Since the RS algorithm schedules the jobs by Johnson' rule, thus we have $a_{j} \geq b_{j}$ for the jobs scheduled after $J_{u}$, i.e. $p_{i_3,j} \geq p_{i_3+2,j}$. From (\ref{alg_rf3ar_1}), we have
\be
C^3_{max} \leq \sum_{J_j\in J'}p_{i_3,j} + p_{i_3,u} + p_{i_3+1,u} + p_{i_3+2,u}. \label{alg_rf3ar_2}
\ee

Since $J' \cap D = \emptyset$, we know the weights of arcs corresponding to the jobs in the last current schedule have not been revised, and $\sum_{i = 1}^{m}p_{ij} \leq \frac{C'_{max}}{\rho}$ for each job $J_j\in J'$, since otherwise the algorithm will continue. Since $J^* \cap D = \emptyset$, the weights of arcs corresponding to the jobs in this optimal schedule have not been revised. Thus, it follows from (\ref{eq_max}), (\ref{eq_job}), Theorem~\ref{th_minmax}, (\ref{eq_crit2}), (\ref{alg_rf3ar_2}) and the fact that the schedule returned by the PAR algorithm is the schedule with minimum makespan among all current schedules, that
\begin{equation*}
\begin{split}
C_{max} \leq C'_{max} & \leq m_3 C^3_{max} + m_2 C^2_{max} + m_1 \sum_{J_j \in J'}p_{i_1,j}\\
& \leq m_3\left(\sum_{J_j \in J'}p_{i_3,j} + p_{i_3,u} + p_{i_3 + 1,u} + p_{i_3 + 2,u}\right) \\ & \quad + m_2\left(\sum_{J_j \in J'}p_{i_2,j} + \frac{1}{2}(p_{i_2,\nu} + p_{i_2+1,\nu})\right) + m_1 \sum_{J_j \in J'}p_{i_1,j}\\
& \leq m_3\left((1+\epsilon)\max_{i\in \{1, \cdots, m\}} \left\{\sum_{J_j\in J^*}p_{ij}\right\} + \frac{C'_{max}}{\rho}\right) \\& \quad + m_2\left((1+\epsilon)\max_{i\in \{1, \cdots, m\}} \left\{\sum_{J_j\in J^*}p_{ij}\right\} + \frac{C'_{max}}{2\rho}C'_{max}\right) \\
& \quad + m_1 (1+\epsilon)\max_{i\in \{1, \cdots, m\}} \left\{\sum_{J_j\in J^*}p_{ij}\right\}\\
& \leq (m_1 + m_2 + m_3)(1+\epsilon)C^*_{max}+ \left(\frac{m_2}{2\rho} + \frac{m_3}{\rho}\right) C'_{max}\\
\end{split}
\end{equation*}

Substituting (\ref{alg_pfar_p1}) and (\ref{alg_pfar_p2}) into the $m_1$, $m_2$, $m_3$ and $\rho$, by a simple calculation, we arrive at
\be
C_{max}\leq C'_{max} \leq (1 + \epsilon)\rho C^*_{max}.
\ee
\end{subcase}
\begin{subcase}
$u < v$.

We also assume that $p_{i_3,u} \geq p_{i_3+2,u}$, and $p_{i_2,\nu} \geq p_{i_2 + 1,\nu}$, an argument similar to the previous case shows that the jobs scheduled after $J_{v}$ satisfies $p_{i_3,j} \geq p_{i_3+2,j}$. Since $u < v$, it follows from (\ref{eq_criticaljob3}) that
\be
C^3_{max} \leq \sum_{J_j \in J'} p_{i_3,j} + \sum^{v}_{j = u} p_{i_3 + 1, j} \leq \sum_{J_j \in J'} p_{i_3,j} + \sum_{J_j \in J'} p_{i_3+1, j}. \label{eq_rf3ar_2}
\ee

Similarly, it is not difficult to show that
\begin{equation*}
\begin{split}
C_{max} \leq C'_{max} & \leq m_3 C^3_{max} + m_2 C^2_{max} + m_1 \sum_{J_j \in J'}p_{i_1,j}\\
& \leq m_3\left(\sum_{J_j \in J'} p_{i_3,j} + \sum_{J_j \in J'} p_{i_3+1, j}\right) \\
&\quad + m_2\left(\sum_{J_j \in J'}p_{i_2,j} + \frac{1}{2}(p_{i_2,\nu} + p_{i_2+1,\nu})\right) + m_1 \sum_{J_j \in J'}p_{i_1,j}\\
& \leq (m_1 + m_2 + 2m_3)(1+\epsilon)C^*_{max}+ \frac{m_2}{2\rho} C'_{max}\\
\end{split}
\end{equation*}

Substituting (\ref{alg_pfar_p1}) and (\ref{alg_pfar_p2}) into $m_1$, $m_2$, $m_3$ and $\rho$, by a simple calculation, we obtain
\be
C_{max}\leq C'_{max} \leq (1 + \epsilon)\rho C^*_{max}.
\ee
\end{subcase}
For the cases where the last current schedule has critical jobs satisfying $p_{i_2,\nu} < p_{i_2 + 1,\nu}$ or $p_{i_3,u} < p_{i_3+2,u}$, analogous arguments would yield the same result.
\end{case}

Now we show that the performance ratio of the PAR algorithm cannot less than $\rho$. First, we propose two instances for $m = 2$ and $m = 3$.
\begin{figure}[ht]
  \centering
  % Requires \usepackage{graphicx}
  \includegraphics[width=3in]{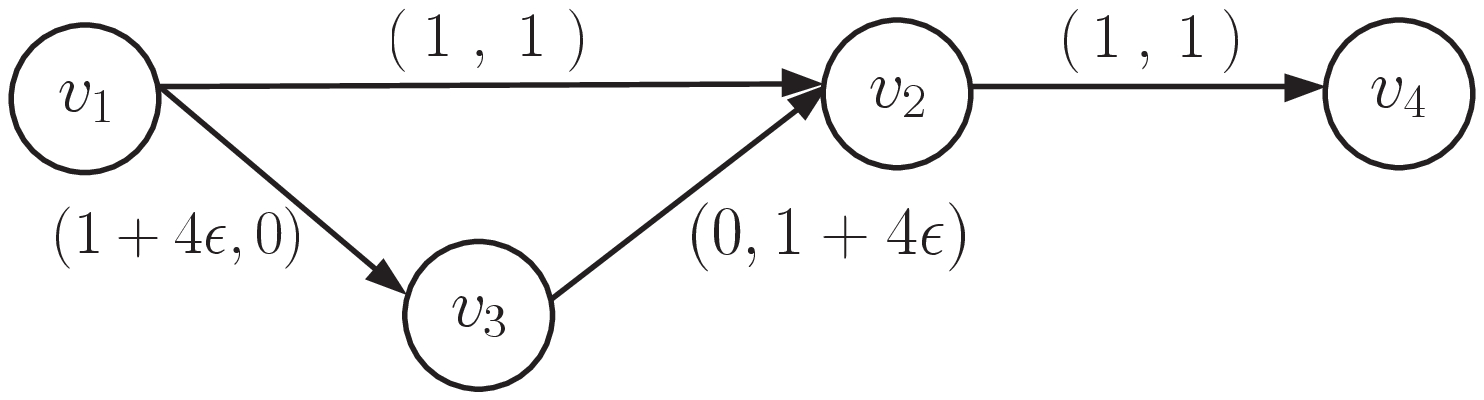}\\
  \includegraphics[width=4in]{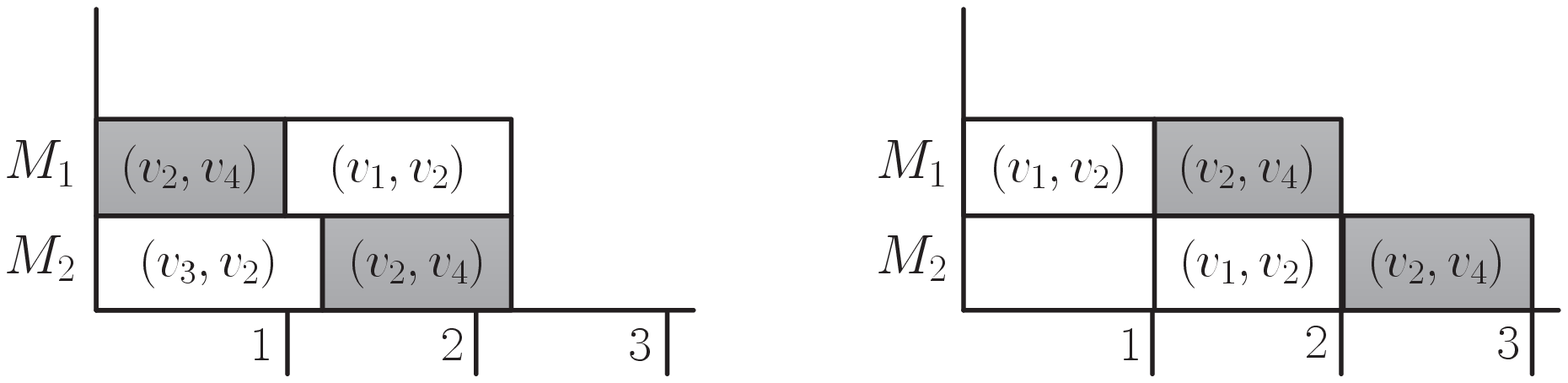}\\
  \caption{Example for $m = 2$}\label{fig2tight}
\end{figure}

If $m = 2$, the performance ratio of the PAR algorithm is $\frac{3}{2}(1+\epsilon)$. Consider the following instance shown in Fig. \ref{fig2tight}. We wish to find a path from $v_1$ to $v_4$. Notice that the ABV algorithm returns the path with arcs $(v_1, v_2)$ and $(v_2, v_4)$, and the corresponding makespan $C'_{max}$ by Johnson's rule is $3$. All the corresponding jobs satisfy $p_{1j} + p_{2j} = 2 \leq \frac{2}{3}C'_{max}$, and thus the algorithm terminates. Therefore, the makespan of the returned schedule by the PAR algorithm is $C_{max} = 3$ (see the right schedule of Fig. \ref{fig2tight}). On the other hand, the optimal makespan is $C_{max}^* = 2+4\epsilon$ with arcs $(v_1, v_3)$, $(v_3, v_2)$ and $(v_2, v_4)$ (see the left schedule of Fig. \ref{fig2tight}). The worst case ratio of the PAR algorithm cannot be less than $\frac{3}{2}$ as $\frac{C_{max}}{C_{max}^*}\rightarrow 3/2$ when $\epsilon \rightarrow 0$ for this instance.

\begin{figure}[ht]
  \centering
  % Requires \usepackage{graphicx}
  \includegraphics[width=3.5in]{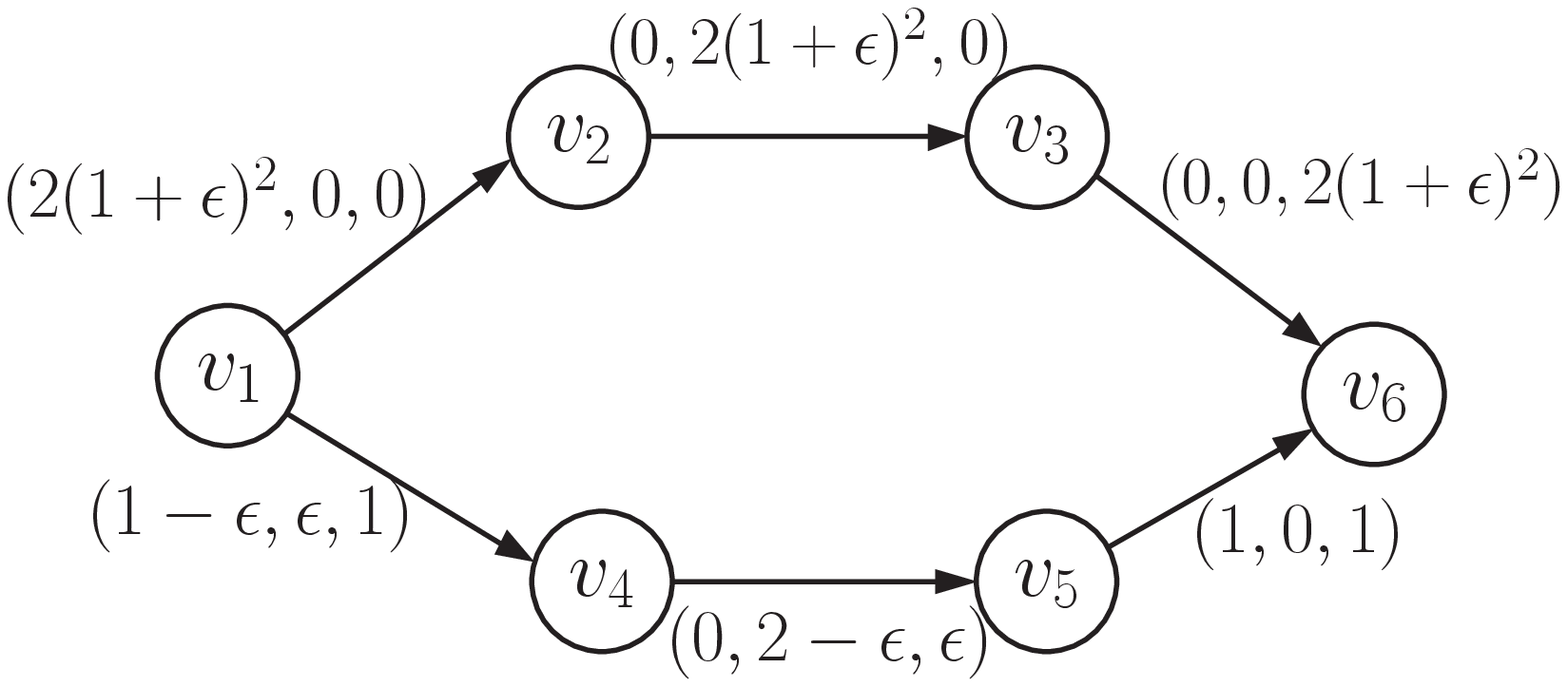}\\
  \includegraphics[width=3.5in]{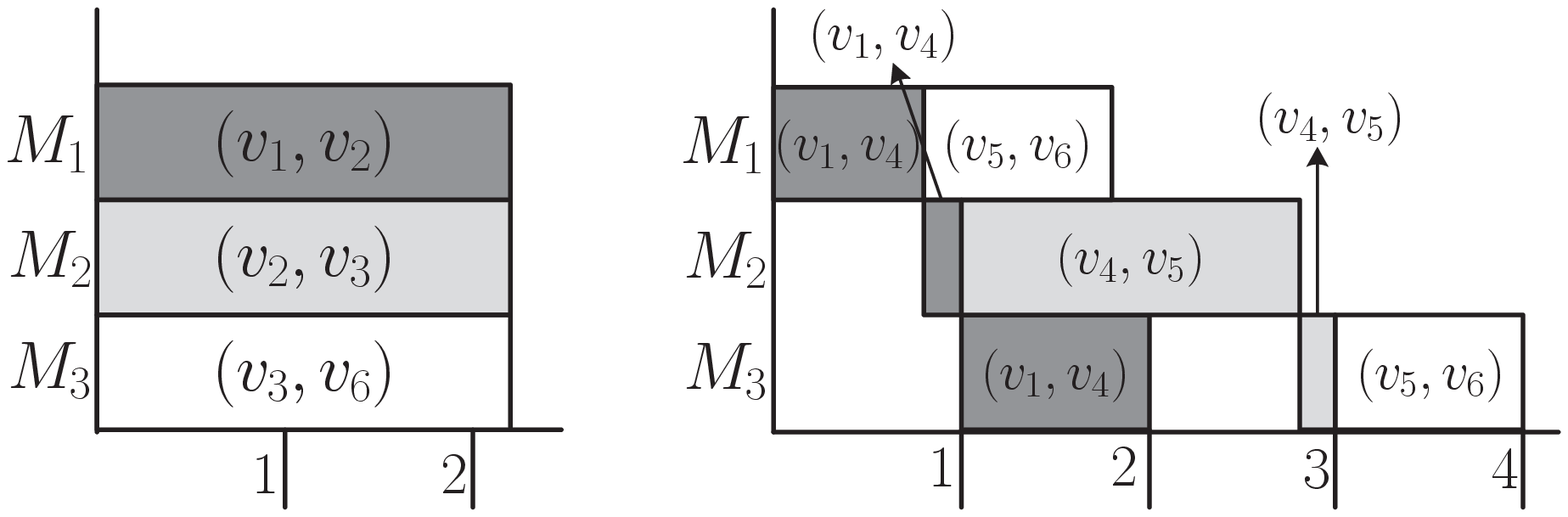}\\
  \caption{Example for $m = 3$}\label{fig3tight}
\end{figure}

For the case where $m = 3$, the performance ratio of the PAR algorithm is $2(1+\epsilon)$. Consider the instance shown in Fig. \ref{fig3tight}. We wish to find a path from vertex $v_1$ to $v_6$. Notice that the ABV algorithm returns the path with arcs $(v_1, v_4) \rightarrow (v_4, v_5) \rightarrow (v_5, v_6)$. The makespan of the schedule returned by the RS algorithm is $C'_{max} = 4$. All the corresponding jobs satisfy $p_{1j} + p_{2j} = 2 \leq \frac{1}{2}C'_{max}$, and thus the algorithm terminates. Therefore, the makespan of the schedule returned by the PAR algorithm is $C_{max} = 4$ (see the right schedule of Fig. \ref{fig3tight}). On the other hand, the makespan of an optimal job schedule is $C_{max}^* = 2(1 + \epsilon)^2$, by selecting the arcs $(v_1, v_2)$, $(v_2, v_3)$, and $(v_3, v_6)$ (see the left schedule of Fig. \ref{fig3tight}). The worst case ratio of the PAR algorithm cannot be less than $2$ as $\frac{C_{max}}{C_{max}^*}\rightarrow 2$ when $\epsilon \rightarrow 0$ for this instance.

\begin{figure}[ht]
  \centering
  % Requires \usepackage{graphicx}
  \includegraphics[width=4.5in]{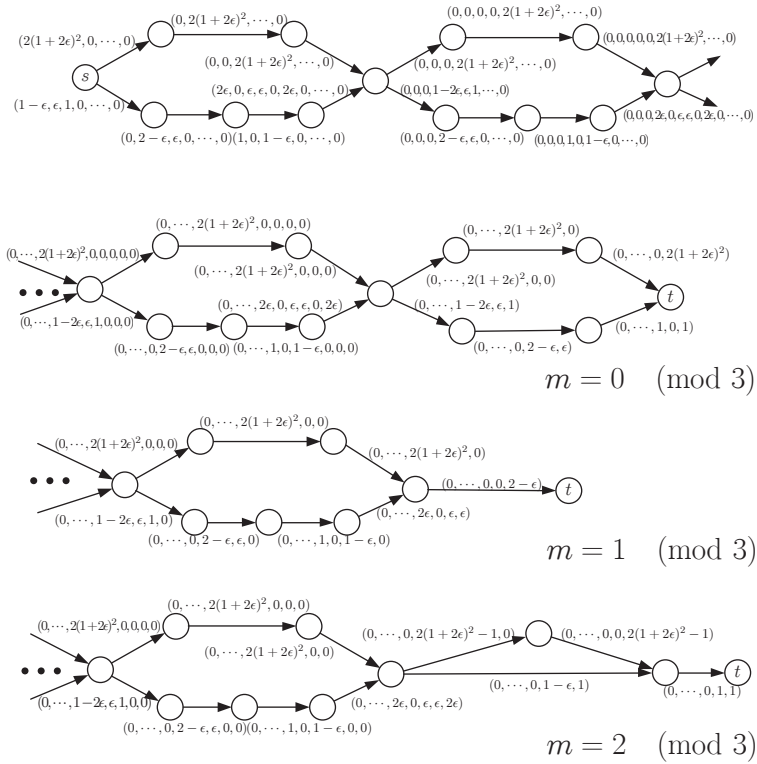}\\
  \caption{Example for fixed $m$}\label{figmtight}
\end{figure}

By extending and modifying the above examples to the general case, the instance described in Fig. \ref{figmtight} can be used to show that the performance ratio of the PAR algorithm cannot be less than $\rho$. $\Box$
\end{proof}

\section{Conclusions}\label{sec_end}
This paper has studied a combination problem of flow shop scheduling and the shortest path problem.  We show the hardness of this problem, and present some approximation algorithms. For future research, it would be interesting to find an approximation algorithm with a better performance ratio for this problem. The question whether $F2|\mathrm{shortest}~\mathrm{path}|C_{max}$ is NP-hard in the strong sense is still open. One can also consider the combination of other combinatorial optimization problems. All these questions deserve further investigation.

\section*{Acknowledgments}
This work has been supported by the Bilateral Scientific Cooperation Project  BIL10/10 between Tsinghua University and KU Leuven.

\end{document}